\documentclass[reprint,superscriptaddress,aps,prl,amsmath,amssymb,nofootinbib]{revtex4-2}

\usepackage{xspace}
\usepackage{aas_macros}

\newcommand{\bs}{\boldsymbol}

\newcommand{\fstat}{FrankenStat\xspace}

\usepackage[explicit]{titlesec}
\titleformat{\section}[runin]{\normalfont\normalsize\itshape}{\thesection}{1em}{#1\textemdash}
\titlespacing{\section}{\parindent}{12pt}{0pt}

\titleformat{\subsection}[runin]{\normalfont\normalsize\itshape}{\thesubsection}{1em}{#1\textemdash}
\titlespacing{\subsection}{\parindent}{12pt}{0pt}

\bibliographystyle{apsrev4-2}
\usepackage{chngcntr}
\usepackage{natbib}
\usepackage{orcidlink}
\usepackage{xcolor}
\usepackage{appendix}
\usepackage{booktabs}
\usepackage{tikz}
\usetikzlibrary{shapes.geometric, arrows}
\tikzstyle{startstop} = [rectangle, rounded corners, 
minimum width=2.0cm, 
text width=1.5cm,
minimum height=1cm,
text centered, 
draw=black, 
fill=red!30]

\tikzstyle{io} = [trapezium, 
trapezium stretches=true, 
trapezium left angle=70, 
trapezium right angle=110, 
minimum width=3cm, 
minimum height=1cm, text centered, 
draw=black, fill=blue!30]

\tikzstyle{process} = [rectangle, 
minimum width=1cm, 
minimum height=1cm, 
text centered, 
text width=1.8cm, 
draw=black, 
fill=orange!30]

\tikzstyle{decision} = [diamond, 
minimum width=3cm, 
minimum height=1cm, 
text centered, 
draw=black, 
fill=green!30]
\tikzstyle{arrow} = [thick,->,>=stealth]


\usepackage{amsmath, physics2, amssymb, amsfonts, amsthm, siunitx, mathtools}
\DeclareSIUnit\arcsecond{as}

\usephysicsmodule{ab,diagmat,xmat}
\DeclareMathOperator{\Tr}{Tr}

\newtheorem{lemma}{Lemma}

\begin{document}

\title{\fstat I: a New Approach to Pulsar Timing Array Data Combination}

\author{David Wright\,\orcidlink{0000-0003-1562-4679}}
\affiliation{Department of Physics, Oregon State University, Corvallis, OR 97331, USA}

\author{Kalista Wayt\,\orcidlink{0000-0001-6630-5198}}
\affiliation{Department of Physics, Oregon State University, Corvallis, OR 97331, USA}

\author{Jeffrey S. Hazboun\,\orcidlink{0000-0003-2742-3321}}
\affiliation{Department of Physics, Oregon State University, Corvallis, OR 97331, USA}

\author{Xavier Siemens\,\orcidlink{0000-0002-7778-2990}}
\affiliation{Department of Physics, Oregon State University, Corvallis, OR 97331, USA}

\author{Rutger van Haasteren\,\orcidlink{0000-0002-6428-2620}}
\affiliation{Max-Planck-Institut f{\"u}r Gravitationsphysik (Albert-Einstein-Institut), Callinstra{\ss}e 38, D-30167 Hannover, Germany\\
Leibniz Universit{\"a}t Hannover, D-30167 Hannover, Germany}

\author{Levi Schult\,\orcidlink{0000-0001-6425-7807}}
\affiliation{Department of Physics and Astronomy, Vanderbilt University, 2301 Vanderbilt Place, Nashville, TN 37235, USA}

\author{Stephen R. Taylor\,\orcidlink{0000-0003-0264-1453}}
\affiliation{Department of Physics and Astronomy, Vanderbilt University, 2301 Vanderbilt Place, Nashville, TN 37235, USA}

\begin{abstract}
        In 2023, after more than two decades of searching, pulsar timing array (PTA) collaborations around the world announced evidence for a stochastic gravitational wave background. 
        It was quickly followed by work from the International Pulsar Timing Array (IPTA), demonstrating that the results of regional collaborations were consistent with each other. The combination of these datasets is still ongoing and represents a significant investment of time and expertise. 
        
        In that IPTA comparison, authors of this letter combined the separate datasets in the standard PTA optimal detection statistic for cross-correlations incoherently, that is, the data was combined without fitting a merged timing model across all PTA datasets, treating datasets of the same pulsar as independent, and neglecting the ``same pulsar, different datasets'' cross-correlations.
        This work refines that method by extending its core ideas beyond detection statistics and into a full, general data-combination method.
        We have demonstrated its efficacy and extreme efficiency on simulated data.
        This new method, \textit{FrankenStat}, is very similar in sensitivity and parameter-constraining power to traditional data combination methods while completing the full data combination in just a few minutes.
    
\end{abstract}

\maketitle

\section{\label{sec:intro}Introduction}
In recent years, a number of pulsar timing array (PTA) collaborations have reported increasing evidence for a stochastic gravitational wave background (GWB) at nanohertz frequencies \cite{ng15gwb,eptadr2_3:gwb,pptadr3:gwb,cptadr1_1:gwb,mpta4.5:gwb}.
This background of gravitational waves has many proposed origins, including cosmological and beyond Standard Model physics, \textit{e}.\textit{g}., inflation, cosmic strings, phase transitions, etc. \cite{ng15newphysics,Caprini:2018mtu,Domenech:2019quo,Kamionkowski:1993fg,Siemens:2006yp} or the unresolvable sum of millions of supermassive black hole binaries (SMBHBs) \cite{ng15smbbh,Burke-Spolaor:2018bvk,Kelley:2016gse,Ellis:2023dgf,EPTA:2023xxk}. While evidence continues to grow for the presence of a GWB, current PTA datasets still lack the sensitivity to differentiate between an astrophysical (SMBHBs) and cosmological origin. 

The various regional PTA collaborations together form the International PTA (IPTA) \cite{manchester:2013}, and the aggregated datasets are more sensitive to a GWB than the constituent datasets \cite{2016MNRAS.458.1267V,pdd+19}. Given the growing signal significance of the constituent PTA datasets, the consistency of the GWB parameters between the different sets of times of arrival (TOAs) that will comprise IPTA DR3 were studied in \cite{ipta3p+2024}. This mainly highlighted the promise of better spectral characterization. Additionally, a rudimentary version of the statistical method presented in this letter was used to show the potential of higher detection significance for a full DR3, giving a combined signal-to-noise ratio of $\rho\sim6.8$, as compared to the next largest individual-PTA signal-to-noise ratio of $\rho \sim 4.9$ \cite{ipta3p+2024}.

Historically, data from pulsars observed by more than one collaboration are combined by merging the TOAs and fitting a single deterministic timing model. Several intricacies are involved in combining these datasets that include: ensuring time standards and offsets are treated correctly between the various telescopes and receiver backends, assessing whether the more sensitive datasets necessitate additional parameters (\textit{e}.\textit{g}., relativistic binary evolution), and checking that noise models are consistent/appropriate for all pulsars \cite{2016MNRAS.458.1267V,pdd+19}. Data combination is time intensive and requires each pulsar to be handled by personnel with the necessary expertise. Owing to these complications, IPTA data releases often lag the regional PTA data releases by two to three years.

Recently, other strategies have been proposed to address the latency of full IPTA data combinations.
For example, \citet{ipta_lite} and \citet{ipta3p+2024} discuss the ``Lite'' method, which uses a Figure of Merit to select the most sensitive single-PTA dataset for each pulsar, enabling rapid preliminary analyses without full data combination.
While the Lite method selects the highest-quality subsets of data, \fstat enables the rapid mathematical combination of all available datasets, preserving the full sensitivity of the array without the computational overhead of a traditional unified timing model.

We use a new method to search for the GWB across pulsar datasets \emph{without} the need for the classic combination of TOAs.
The PTA analysis infrastructure is already set up to search for GWs across multiple pulsars, so with a few judicious choices, we can use much of the existing infrastructure to search for GWs across not only different pulsars but also different datasets from the same pulsar.
As will be shown with extensive simulations, this method retains comparable sensitivity and spectral properties of the traditionally combined data for GW searches---despite taking only a few seconds to combine a pulsar.
Combining this technique with the speedups enabled by new PTA analysis software, built on the backbone of GPU-accelerated and ML/AI-inspired techniques, means that the complete pipeline of data combination and downstream analyses can be done much faster than traditional methods allow (weeks instead of months). 

The remainder of the paper is laid out as follows. First we motivate the \fstat combination by revisiting how it was devised: examining the standard PTA likelihood for GWB searches and making modifications such that the data from \textit{all} PTAs are correctly modeled. We then introduce the \fstat combination method, validating it with simulation studies, and finally present results and implications for PTA science.

\section{The canonical PTA likelihood}
The PTA likelihood models TOAs as the sum of a deterministic pulsar timing model (TM) and multiple possible stochastic processes:
\begin{equation}
	\label{eq:toas}
	\bs{t} = \bs{t}_D + \bs{Fa} + \bs{n},
\end{equation}
where $\bs{t}_D$ is the deterministic timing delays, $\bs{Fa}$ is a Fourier series representation of the stochastic red noise processes (Fourier basis $\bs{F}$ and coefficients $\bs{a}$), and $\bs{n}$ contains white noise.

We then remove a best-fit timing model, $t_M$, from the TOAs to arrive at the \textit{timing residuals},
\begin{equation}
	\label{eq:stoch-part}
	\bs{\delta t} \equiv \bs{t}_D - \bs{t}_M + \bs{Fa} + \bs{n}= \bs{M\epsilon} + \bs{Fa} + \bs{n},
  \end{equation}
  where $\bs{M\epsilon}$ is a first-order, \textit{i}.\textit{e}., linear\footnote{In practice, many TM parameters are linear, \textit{e}.\textit{g}., the \textsc{JUMP} parameters of the pulsar. Hence, for the linear timing parameters, this perturbation model is exact.}, Taylor expansion of the TM, where $M_{ij}\equiv \left(\frac{\partial t_{M,i}}{\partial p_j}\right)_{\mathbf{p} = \mathbf{p}_0}$ is a matrix of the first derivatives of the TM wrt the paremeters evaluated at all TOAs and the best-fit parameters, $\mathbf{p}_0$, from a linear least squares analysis and $\bs{\epsilon}=\delta \bs{p}$ are the parameter deviations.
  This expansion accounts for deviations of the TM from the true values.
  Finally, assuming a multi-variate Gaussian with covariance ${\bs{N} = \ab< \bs{n} \bs{n}^T> }$ describes the distribution of $\bs{\delta t} - \bs{M\epsilon} - \bs{Fa}$, the log likelihood is
\begin{equation}
\begin{aligned}
	\label{eq:lnlikelihood}
	\ln{p{\ab(\bs{\delta t} \mid \bs{a}, \bs{\epsilon} )}} &= 
    -\frac{1}{2} \biggl [ \ln \vab { 2 \pi \bs{N} } \\
	& + \ab (  \bs{\delta t}\!-\!\bs{M\epsilon}\!-\!\bs{Fa} )^T\!\bs{N}^{-1} \ab ( \bs{\delta t}\!-\!\bs{M\epsilon}\!- \!\bs{Fa}  ) \biggr ] .
\end{aligned}
\end{equation}

In practice, we then perform a two step marginalization over the TM and Fourier coefficients \citep{NANOGrav:2023icp} that results in
\begin{equation}
\begin{aligned}
	\label{eq:lnlikelihood-marg}
	&\ln{p{\ab(\bs{\delta t} \mid \bs{\eta})}} = 
    -\frac{1}{2} \biggl [ \ln \vab { 2 \pi \bs{C} } + \bs{\delta t} ^T \bs{C}^{-1} \bs{\delta t} \biggr ],
\end{aligned}
\end{equation}
where $\bs{\eta}$ are Gaussian-process hyperparameters and 
\begin{equation}
\begin{aligned}
	\label{eq:lnlikelihood-marg-KD}
  \bs{C}^{-1} =& \bs{D}^{-1} - \bs{D}^{-1}\bs{F}\bs{\Theta}^{-1}\bs{F}^T\bs{D}^{-1}, \\
  \bs{D}^{-1} =& \bs{N}^{-1} - \bs{N}^{-1}\bs{M}\bs{\Delta}^{-1}\bs{M}^T \bs{N}^{-1}, \\
  \bs{\Theta} \equiv&  \ab <\bs{a}\bs{a}^T>^{-1} + \bs{F}^T \bs{D}^{-1}\bs{F},\\
  \bs{\Delta} \equiv& \ab <\bs{\epsilon}\,\bs{\epsilon}^T>^{-1}\, + \bs{M}^T\bs{N}^{-1}\bs{M}.\\
\end{aligned}
\end{equation}

If we do not vary the white noise, the inverse $\bs{D}^{-1}$ can be cached in its entirety. The computational burden of the likelihood is then shifted to $\bs{\Theta}^{-1}$, which is considerably easier to calculate \citep{NANOGrav:2023icp}. The covariance matrix of the Fourier coefficients, $\langle\bs{a}\bs{a}^T\rangle$, represents the Fourier Expansion of a stationary Gaussian Process with power spectral density of the RN or GWB---usually a power law. The covariance matrix for the timing model parameter perturbations, $\langle\bs{\epsilon}\,\bs{\epsilon}^T\rangle$, is also treated as diagonal, \textit{i}.\textit{e}., with no correlation between perturbations, and is ``analytically marginalized'' by setting the diagonal variances $\langle\epsilon_i\,\epsilon_i^T\rangle=\infty$ \cite{2009MNRAS.395.1005V,2016ApJ...821...13A}. This then implies $\langle\epsilon_i\,\epsilon_i^T\rangle^{-1}=0$, and the expressions above simplify. This marginalization is a powerful tool that gives immense freedom of variations to the TM parameters, allowing searches for GWs without hindrance from an ill-fit TM.

\section{\fstat data combination}
Typically, PTA data combination involves combining TOAs and fitting a single, merged TM.
The \fstat method of combining data, unlike the typical strategy, does not require a merged TM.
We instead opt to concatenate the residuals,
\begin{equation}
	\bs{\delta t} =
    \bxmat[showtop=2,format=#1_{#2}]{\bs{\delta t}}{n}{1},
\end{equation}
assuming that each PTA (denoted as subscripts) has adequately fit a TM to each pulsar.
After concatenating the residuals, we can proceed to build the likelihood.

For now, we limit the residuals to originate from one pulsar in possibly many different PTAs.
To correctly model our new, concatenated residuals we must also combine the TM perturbations $\bs{M \epsilon}$, Fourier series $\bs{Fa}$, and white noise covariance $\bs{N}$.

The Fourier design matrix $\bs{F}$ is~${N_{TOA} \times 2N_{freq}}$, consisting of alternating sines and cosines evaluated at each TOA and modeled frequency where the frequencies are based on the total timing baseline of the pulsar.
The correct combination of $\bs{F}$ is therefore just the concatenation along the first dimension,
\begin{equation}
	\bs{F} =
    \bxmat[showtop=2,format=#1_{#2}]{\bs{F}}{n}{1},
\end{equation}
and the coefficients $\bs{a}$ are unaffected by this scheme.
Note that this concatenation, along with the following white noise covariance matrix combination, are exactly the same for a typical data combination as well.

The white noise covariance $\bs{N}$ is modeled per receiver/back-end combination \citep{NANOGrav:2023icp,NANOGrav:2023ctt}.
To construct a combined $\bs{N}$, we build it as usual while including blocks for each receiver/back-end combination in the new concatenated set of residuals.
This will result in a block diagonal, square matrix:
\begin{equation}
	\label{eq:franken-N}
	\bs{N} = \bdiagmat[empty={}]{\bs{N}_1, \bs{N}_2, \ddots, \bs{N}_n}.
\end{equation}

The biggest difference in this new analysis is that the TM Taylor expansion term, $\bs{M \epsilon}$, requires a different combination method.
The TM design matrix $\bs{M}$ is~$N_{TOA} \times N_{par}$, consisting of the first derivative of each TM component evaluated at the least-squares fit and each TOA for a given pulsar.
The second dimension of $\bs{M}$ will typically not be the same for the same pulsar in different PTAs.
Furthermore, even if the dimension agreed, the least-squares fit may be different for each PTA and result in different first derivatives.
Thus, it is incorrect to simply concatenate the design matrices $\bs{M}$ along the TOA dimension.
We will instead construct a \textit{block diagonal} design matrix:
\begin{equation}
	\label{eq:franken-M}
	\bs{M \epsilon} =
    \bdiagmat[empty={}]{\bs{M}_1, \bs{M}_2, \ddots, \bs{M}_n}
    \bxmat[showtop=2,format=#1_{#2}]{\bs{\epsilon}}{n}{1},
\end{equation}
This is the simplest choice for treating the TM perturbations. It is also a conservative one. By ignoring that some of these parameters are the same physical process for a given pulsar, or that others are correlated, we add freedom/uncertainty to the model that can only lead to a loss of information. As we will show, this has \emph{minimal} effect on the outcome of the analysis. 

Note that some of these techniques were previously applied to single pulsar analyses of IPTA DR2. Known as the Borg method, it stitched constituent pulsars together with common intrinsic red noise, dispersion measure, and GWB parameters \cite{Schult2021}. \fstat builds on this foundation, delivering a comprehensive approach to incoherent data combination.

These steps can be repeated per-pulsar, per-PTA.
The resulting concatenated and block-diagonal quantities can be used in all typical PTA likelihoods.  

\subsection{Implementation}
The most efficient way to implement this combination such that it appears correctly in \eqref{eq:lnlikelihood} is to build combined \textit{pulsar objects}.
Our analysis codes \citep{enterprise,discovery}, create pulsar objects (in the sense of object oriented programming) that contain noise parameters, design matrices, TOAs, and residuals.
By including our new concatenated and block-diagonal quantities in these objects, the likelihood is \textit{automatically} constructed correctly.
This means that any downstream analyses function exactly the same with a \fstat-combined pulsar object (FrankenPulsar\footnote{The ``stitching together'' of these pulsar datasets leads to our shorthand name. We leave it to the reader to decide whether the adjective refers to the authors or the pulsars.}.) as they would with a standard pulsar object, including all downstream Bayesian analyses.
We stress that no code modifications, beyond the pulsar objects themselves, are necessary to use the \fstat combination with \emph{any} PTA analysis code that uses standardized pulsar objects, such as \textsc{enterprise} and {\textsc{enterprise-extensions}} \citep{enterprise,enterprise-extensions}.

To be explicit, we only need to perform three operations to create a FrankenPulsar.
Start with a collection of un-combined pulsar objects.
Then,
\begin{samepage}
\begin{enumerate}
  \item Iterate over the objects' attributes and concatenate any that are vectors of length $N_{TOA}$.
  \item Construct the block diagonal TM design matrix $\bs{M}$.
  \item Truncate the sky position attribute at the precision where all inputs agree. This is necessary for the calculation of spatial correlations.
\end{enumerate}
\end{samepage}
Operations like the concatenation of the Fourier design matrices and the creation of block-diagonal white noise matrices happen automatically in the analysis codes, as they would in a full combination.

\section{Simulation study}
\label{sec:simulations}
To validate the \fstat data combination, we simulate one hundred traditionally-combined and \fstat-combined PTA datasets.
A high-level overview of our pipeline is given in Figure \ref{fig:pipeline}.

\begin{figure}[t]
	\begin{tikzpicture}[node distance=1.5cm]

		\node (simcomb) [startstop] {Simulate PTA};
		\node (spnacomb) [process, below of=simcomb] {SPNA};
		\node (refitcomb) [process, below of=spnacomb] {Refit TMs with Noise};
		\node (split) [startstop, below of=refitcomb, text width=2.0cm] {Split PTAs \& Refit TMs};

		\node (spna-ptas) [process, right of=spnacomb, xshift=1.3cm] {SPNA PTAs};
		\node (refit-ptas) [process, below of=spna-ptas] {Refit TMs with Noise};

		\node (frankenize) [startstop, below of=refit-ptas] {Frankenize \& SPNA};

		\node (maxlike) [process, right of=refit-ptas, xshift=1.3cm, yshift=0.18cm] {Maximum
			a Posteriori
			all PTAs};

		\node (detstat) [startstop, right of=frankenize, xshift=1.3cm, fill=blue!30] {Detection
			Statistics};

		\draw [arrow] (simcomb) to (spnacomb);
		\draw [arrow] (spnacomb) -- (refitcomb);
		\draw [arrow] (refitcomb) -- (split);
		\draw [arrow] (split.north east) to (spna-ptas.west);
		\draw [arrow] (spna-ptas) to (refit-ptas);
		\draw [arrow] (refit-ptas) to (frankenize);
		\draw [arrow] (frankenize.north east) to (maxlike.west);
		\draw [arrow] (maxlike) to (detstat);

	\end{tikzpicture}
	\caption{
		Flow chart detailing pipeline.
		First, a baseline, fully ``Combined'' PTA is simulated.
		Then, a single pulsar noise analysis (SPNA) is ran on each pulsar using gradient descent to find maximum \textit{a posteriori} (MAP) noise parameters, and the timing model (TM) is refit.
		To construct the split PTAs that we will later recombine with \fstat, we take every third TOA from the original data to create three new PTA datasets.
		We then run an SPNA and TM refit on every pulsar in these new, split datasets.
		Once the TMs are refit, we can ``Frankenize'' the pulsars to create FrankenPulsars.
        The FrankenPulsars should give us noise estimates that are more accurate than the individual PTAs and comparable to the Combined pulsars.
        To benefit from this increased sensitivity, we run SPNAs on the \fstat pulsars as well.
	    In order to calculate detection statistics, we also need estimates of the GWB.
		We find a MAP estimate using a CURN likelihood for the Combined PTA, each split PTA, and the \fstat PTA.
		These MAP parameters are then used to find the SNR and corresponding $p$-value from the null distribution.
	}
	\label{fig:pipeline}
\end{figure}
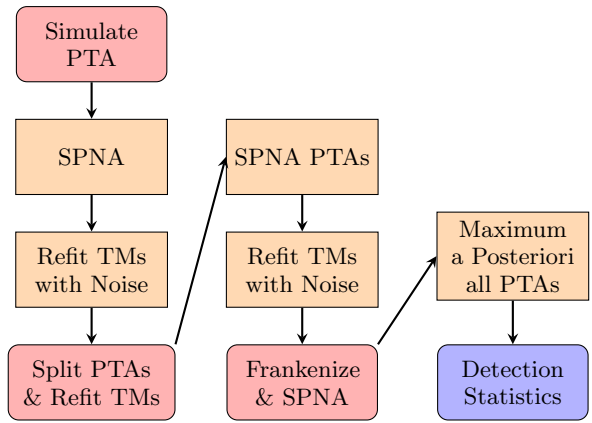

We begin by simulating a fully ``Combined'' PTA with 126 pulsars uniformly distributed on the sky and a twenty year baseline with a cadence of three observations per month.
This ``Combined'' PTA is our proxy for a traditionally-combined PTA dataset.
We build a TM by hand that includes sky position, parallax, and quadratic spindown.
Each of these parameters, other than sky position, is drawn from a uniform distribution given in Table \ref{tab:param-dist}. Uniform sky positions were created via a Fibonacci lattice. For a given number of pulsars, this is always the same lattice. Therefore, our pulsar positions are constant across simulations.
After building the TM, we load our simulated pulsars into \textsc{pta\_replicator} \citep{ptareplicator}.
Then, using \textsc{pta\_replicator}, we inject intrinsic red noise, a gravitational wave background, and white noise. 
The parameters for these signals are also drawn from distributions given in Table \ref{tab:param-dist}.
For each simulation, we draw new TM, signal, and noise parameters.
The GWB parameters are held fixed for all of our simulations.
We choose a spectral index of $\gamma = 13 / 3$ to reflect the expected value for a population of SMBHBs evolving purely due to GW emission and an amplitude of $\log_{10} A = -15.7$ to give $\geq5\sigma$ GWB detections in our Combined simulations.

\begin{table}[t]
  \centering
\begin{tabular}{lllc}
  \toprule
  Parameter & Distribution & Fractional Unc. & Units \\
  \midrule
  $F$ & Uniform(2.47, 3) & Uniform(-15, -13) & $\log_{10}$~\unit{\hertz} \\
  $\dot{F}$ & Uniform(-17, -15) & Uniform(-5, -3) & $\log_{10}$~\unit{\hertz\squared} \\
  Parallax & Uniform(0.5, 2) & Uniform(0.1, 0.8) & \unit{\milli\arcsecond} \\
  \midrule
  GWB $\gamma$ & Fixed: 13/3 & & \\
  GWB $\log_{10} A$ & Fixed: -15.7 & &\\
  RN $\gamma$ & Uniform(1, 6) & &\\
  RN $\log_{10} A$ & Uniform(-20, -12) & &\\
  TOA Error & Normal(100, 10)& & \unit{\nano\second}\\
  \bottomrule
\end{tabular}
\caption{
  Table of parameters used to build simulated PTA datasets.
  For each parameter, we draw a value and a fractional uncertainty.
  The uncertainty on the parameter will be equal to the value times the fractional uncertainty in linear space.
  For our log space parameters, we add the fractional uncertainty.
  For each simulation, the uniform sky positions are created via a Fibonacci lattice.
  For a given number of pulsars, this is always the same lattice.
  Therefore, our pulsar positions are constant across simulations.}
\label{tab:param-dist}
\end{table}
In order to realistically fit the TM, we run a single pulsar noise analysis (SPNA) and refit the TM using the maximum \textit{a posteriori} MAP noise parameters for each pulsar.
A traditional SPNA requires an expensive Markov chain Monte Carlo (MCMC) run in order to find estimates of the noise parameters.
We need to run an SPNA for each pulsar in each simulation, which becomes a major computational bottleneck in our simulation study.
Instead of MCMC, we opt to utilize the automatic differentiation present in \textsc{discovery} to find (MAP) noise parameters via gradient descent.
In combination with \textsc{discovery} \citep{discovery}, we use \textsc{Optax} \citep{deepmind2020jax} and \textsc{NumPyro} \citep{phan2019composable,bingham2019pyro} to build our optimization routines.
In our testing, this method accurately returns the modes of the MCMC posteriors while having many orders of magnitude shorter runtime\footnote{Note that a few of the gradients in these fits are comparatively large. Particularly, we found some WN parameters, e.g. EFAC, get stuck in local minima at the edges of the prior range. These few problematic parameters greatly skewed the detection statistics when left unfixed.\label{efac_footnote}}.

After acquiring noise estimates and refitting TMs, we split our Combined PTA into three separate PTAs.
There are many possible ways to split the data---here we choose to take the TOAs and assign them to PTA 1, PTA 2, and PTA 3 in order repeatedly until we exhaust the number of TOAs.
This results in 126 pulsars in each PTA with $1/3$ of the total number of TOAs.
The split PTAs then require TM refits followed by SPNAs and another round of TM refits.

With the fully split and refit PTAs ready, we can combine them in the \fstat fashion.
As previously described, this results in combined pulsar objects that we can then pass on to any of our PTA analysis codes.
However, the \fstat-combined pulsars should also have stronger constraints on noise parameters than the split PTAs.
We therefore run SPNAs on each \fstat pulsar and update their stored noise parameters.

Now that our simulations are complete, we wish to compute detection statistics.
We choose to use the PTA optimal statistic to compute SNR and $p$-values, as a standard in the field \citep{Chamberlin:2014ria, noise_marg} with which to compare.

These calculations also require parameter estimates, much like the TM refits.
However, the parameter estimates that we now need are not per-pulsar but must be calculated across the entire PTA.
Again, in order to find these estimates at scale, we find MAP estimates using \textsc{discovery}.
For computational purposes, we use a likelihood with a common uncorrelated process (CURN) instead of Hellings-Downs (HD) correlations.

Given the MAP estimates, we can compute the SNR and detection significance ({$p$-value}) for each of our simulated PTAs.
These calculations are done in \textsc{discovery}, using the rank-reduced methods introduced by \citet{vanHaasteren:2025jvo}.
The rank-reduced methods enable analytic {$p$-value} calculations that were previously computationally prohibitive.

\begin{figure}[t]
	\centering
	\includegraphics[width = \columnwidth]{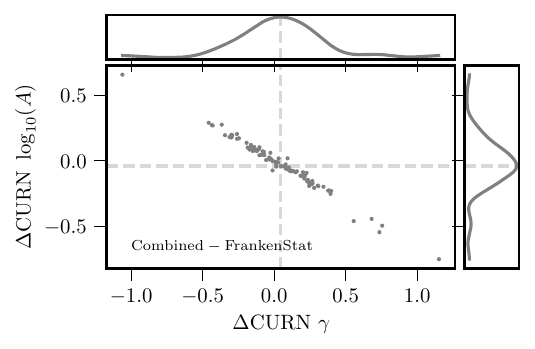}
	\caption{
		Difference between Combined and \fstat CURN maximum likelihood estimates for one hundred simulations.
		The estimates are in close agreement, with $\mu \pm \sigma$ of $-0.03 \pm 0.19$ and $0.05 \pm 0.28$ for $\log_{10}\ab(A)$ and $\gamma$, respectively.
	}
	\label{fig:curn-max-like}
\end{figure}

\begin{figure}[b]
	\centering
	\includegraphics[width = \columnwidth]{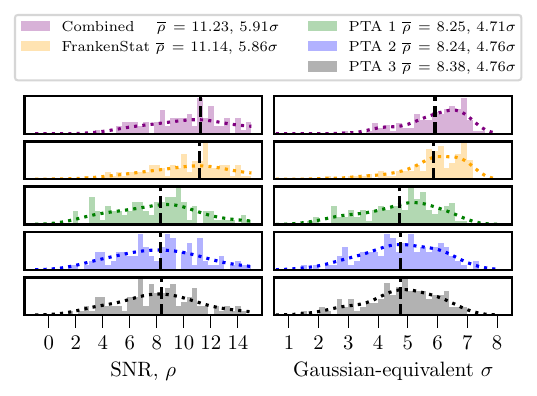}
	\caption{
		Distribution of SNRs and $p$-values corresponding to the maximum likelihood parameters in Figure~\ref{fig:curn-max-like}.
        Dotted lines represent 1D kernel density estimators, and black dashed lines are the means of the distributions.
		The \fstat distribution show excellent agreement with the Combined data.
		Both \fstat and the traditionally-combined data show a marked improvement in SNR and detection significance over the individual datasets.
	}
	\label{fig:snrs-pvals}
\end{figure}

\begin{figure}[ht]
	\centering
	\includegraphics[width = \columnwidth]{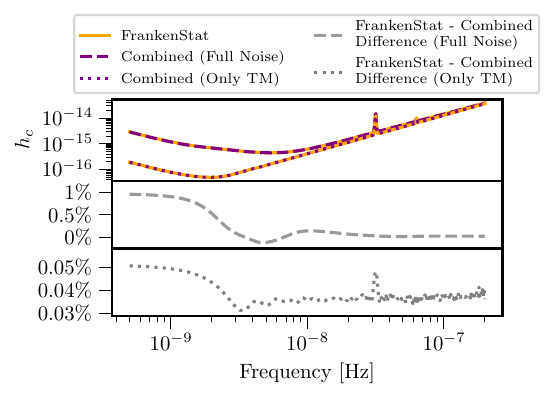}
	\caption{
		PTA sensitivity curves from a representative simulation of the Combined and \fstat data.
        In this plot, we show two types of sensitivity curve: one including the effects of the timing model fit, intrinsic red noise, white noise (WN), and GWB self-noise (``Full Noise''), and one with only timing model effects and WN (``Only TM'').
        For the sensitivity curve with TM + WN, the differences between the Combined and \fstat curves are due only to the different TMs, since the WN is kept constant. 
		These sensitivity curves are nearly identical and show at most a 0.05\% difference.
		In this simulation, the difference is always positive, which means that \fstat is slightly less sensitive but negligibly so when only accounting for the TM.
        When including the noise, the differences become only slightly larger and peak at about $1\%$.
        These differences are now mostly due to the separate noise parameters that come from the Combined and \fstat MAP noise and MAP CURN estimates.
	}
	\label{fig:sensitivity}
\end{figure}

\section{Results}
\label{sec:results}
Before analyzing the detection statistics, we check that the \fstat and Combined PTAs recover the same estimated GWB parameters.
As seen in Figure \ref{fig:curn-max-like}, the \fstat combination shows excellent agreement with the Combined data.

We now calculate the distribution of SNR for an HD correlated GWB over the one hundred simulations of each individual (PTA 1--3), Combined, and \fstat PTA we have simulated.
The histograms and 1D kernel density estimators for these distributions are shown in Figure \ref{fig:snrs-pvals}.
As expected, the individual, split PTAs have similar but smaller SNRs than the Combined data.
The Combined and \fstat SNRs differ by less than 1\%, showing that the \fstat combination has effectively the same sensitivity to the GWB. This is clearly seen in the sensitivity curves in Figure \ref{fig:sensitivity}.
The \fstat combination has minimal impact on the dataset's sensitivity to a GWB.
In certain cases, the difference in sensitivity can be proved analytically.
One such case is discussed in the Appendix. 

The SNR alone does not tell us about the significance of the measurement.
We calculate the probability that the SNR could have been measured under the null hypothesis: the $p$-value.
For each $p$-value, we find the Gaussian-equivalent $\sigma$ and plot the distribution in Figure \ref{fig:snrs-pvals}.
Again, as with the SNR, the distribution of $p$-values is almost one-to-one: the Combined and \fstat data differ by less than 1\%.
Compared to the individual PTAs, the Combined and \fstat data gain over $1\sigma$ of significance.

\section{Conclusions}
\label{sec:conclusions}
We have developed a PTA data combination method, \textit{\fstat}, that is simple to implement and have demonstrated its efficacy.
Contrasting usual PTA data combination strategies that take years and considerable computational resources, \fstat pulsar combinations can be run on virtually any modern computer in minutes. 

\fstat is not only fast and efficient---it also has similar sensitivity to gravitational waves as a traditionally combined dataset.
The implementation of \fstat used in this letter is available on GitHub\footnote{https://github.com/davecwright3/frankenstat-paper-1}.
For readers interested in using \fstat in PTA analyses, we recommend using the Python package \textsc{MetaPulsar} \citep{metapulsar}.
\textsc{MetaPulsar} contains an implementation of \fstat, alongside other data-combination utilities and strategies.
Going forward, the reference \fstat implementation will be maintained there. 

Late in the preparation of this work, an analysis of IPTA DR3 datasets was posted \cite{Yu:2025dor} which uses \textsc{MetaPulsar} in a different mode that has yet to be validated in publicly available work. One clear advantage of \fstat over the method used in \cite{Yu:2025dor} is that it sidesteps any questions about which dataset's timing model parameters will be used. We also note that the anomalous WN parameters reported therein could be related to the challenges we point out with MAP-estimate convergence (see fn. \ref{efac_footnote}).

As an extension to the \fstat work shown here, we can treat datasets of the same pulsar as independent and neglect the ``same pulsar, different datasets'' cross-correlations in the calculation of the standard PTA optimal detection statistic. This approach to data combination, which is what was originally applied in the IPTA comparison paper \cite{ipta3p+2024}, is what we have named, ``The Covariance Approach.'' A detailed derivation that highlights corrections to the statistic's variance that were neglected in the original IPTA comparison will be described here \cite{Frankstatcov}.

We encourage the adoption of our method in PTA data analysis pipelines. Once constructed, FrankenPulsars can be used for any signal searches across multiple datasets. 

\section{Acknowledgments}
The authors are funded as part of the NANOGrav Collaboration through the National Science Foundation (NSF) NANOGrav Physics Frontiers Center award \#2020265. J.S.H. and K.W. acknowledge support from NSF CAREER award \#2339728. J.S.H, K.W and D.W. acknowledge support from an Oregon State University startup fund. S.R.T and L.S acknowledges support from NSF AST-2307719. S.R.T acknowledges support from an NSF CAREER award \#2146016, and a Chancellor's Faculty Fellowship from Vanderbilt University.

\bibliography{main}

\appendix
\counterwithin{equation}{section} 
\setcounter{equation}{0}
\counterwithin{equation}{subsection}
\renewcommand{\theequation}{A\arabic{equation}}
\section{Appendix: Proof of \fstat SNR upper bound for nested case}
\label{appendix:proof}
Here we prove that if the blocks of a block-diagonal \fstat TM design matrix comprise \textit{exactly} the Combined, non-block diagonal TM design matrix, then the \fstat SNR is \textit{less than or equal to} the Combined SNR.
In other words, this proof is valid if both the FrankenStat and Combined TM design matrix are produced from the \textit{same TM fit}. This would be the case if the individual PTAs' TM fits are the same as the Combined TM fit.

The proof proceeds by showing that, in the case described, the inverse-noise-weighted transmission function \( \mathcal{N}_I^{-1}(f) \) \citep{Hazboun:2019vhv} for the \fstat combination is always less than or equal to the full data combination for any given pulsar \( I \)\,.
This quantity appears in the expression for the expected SNR denoted as \(\rho\):
\begin{equation}
	\label{eq:ncal-snr}
	\rho^2 \simeq \sum_I \sum_{J>I}T_{IJ}\chi^2_{IJ} \int_0^{f_\textrm{Nyq}} \textrm{d}f \frac{S_h^2(f) \mathcal{R}^2(f)}{\mathcal{N}_I(f)\mathcal{N}_J(f)},
\end{equation}
where \( S_h(f) \equiv P_h(f) / \mathcal{R}(f) \), \(P_h = \frac{A^2_\textrm{gw}}{12\pi^2} \left(\frac{f}{f_\textrm{yr}}\right)^{2\alpha}f^{-3}\) is the one-side strain PSD, and \( \mathcal{R}(f) =  \left(12\pi^2f^2\right)^{-1}\) is the timing-residual response function to a monochromatic plane GW averaged over inclination, sky position, and polarization.

\citet{Hazboun:2019vhv} define \( \mathcal{N}_I^{-1}(f) \) as
\begin{equation}
	\label{eq:ncal-def}
	\mathcal{N}_I^{-1}(f) = \frac{1}{2T_I}\Tr\left[ F_I^\dagger G_I\left(G_I^\dagger C_I G_I \right)^{-1}G_I^\dag F_I\right],
\end{equation}
where \( G_I \) is an orthonormal basis for the null space of the pulsar's TM, \( C_I \) is the time-domain covariance matrix for the pulsar, and \( F_I \) is a column vector of complex exponentials evaluated at each TOA for a given frequency \( \left(F_I\right)_k \equiv \exp\left(i 2\pi f t_k\right)\).

The quantity in the trace is very nearly a projection operator, but not quite.
There is a nice property involving the trace of a projection operator.

\begin{lemma} \label{lemma:psd-trace}
	The trace of a projection matrix times a positive semidefinite matrix is greater than or equal to zero.
\end{lemma}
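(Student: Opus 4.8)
The plan is to reduce the statement to two elementary facts: that an \emph{orthogonal} projection $P$ can be written as $P=\sum_{k=1}^{r}u_k u_k^\dagger$ for some orthonormal set $\{u_k\}$ spanning its range, and that a positive semidefinite matrix $A$ satisfies $v^\dagger A v\ge 0$ for every vector $v$. Before anything else I would make explicit that ``projection matrix'' here means an orthogonal projection, $P=P^2=P^\dagger$ — this is the object the surrounding argument actually supplies (an orthonormal null-space basis $G_I$ sandwiched between the remaining factors), and orthogonality is genuinely needed: for a merely idempotent but non-Hermitian $P$ one can exhibit a rank-one $P$ and a rank-deficient $A\succeq 0$ with $\Tr(PA)<0$, so the hypothesis cannot just be ``idempotent.''

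With that in place the argument is short. First, decompose the projector as $P=UU^\dagger$, where the columns of $U$ form an orthonormal basis of $\mathrm{range}(P)$ (equivalently, diagonalize $P=V\Lambda V^\dagger$ with $V$ unitary and $\Lambda$ a $0$--$1$ diagonal matrix). Second, move the projector through the trace by cyclicity: $\Tr(PA)=\Tr(UU^\dagger A)=\Tr(U^\dagger A U)=\sum_{k=1}^{r}u_k^\dagger A u_k$. Third, apply positive semidefiniteness termwise: each summand $u_k^\dagger A u_k$ is $\ge 0$, so the sum is $\ge 0$, with equality precisely when $\mathrm{range}(P)\subseteq\ker A$. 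A one-line variant avoids the decomposition entirely: since $P=P^\dagger P$, writing $A=A^{1/2}A^{1/2}$ with the PSD square root gives $\Tr(PA)=\Tr\!\big((PA^{1/2})^\dagger(PA^{1/2})\big)=\lVert PA^{1/2}\rVert_F^2\ge 0$.

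There is no real obstacle in the inequality itself; it is a two- or three-line calculation either way. The only points that demand care are (i) pinning down the correct hypothesis on $P$, as noted above, and (ii) — if one wants the lemma to slot cleanly into the main proof — separately checking that the matrix appearing inside the trace of Eq.~\eqref{eq:ncal-def}, in the form it takes when the \fstat and Combined cases are compared, really is an orthogonal projection (or a difference of such that inherits the needed sign), so that the lemma is applicable at all. That bookkeeping, rather than the trace inequality, is where the attention goes.
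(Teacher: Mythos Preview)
Your proof is correct and close in spirit to the paper's, though the factorizations are mirror images. The paper first notes that an orthogonal projection is itself positive semidefinite and then proves the slightly more general fact that $\Tr[AX]\ge 0$ for any two PSD matrices $A,X$: it writes $A=\sqrt{A}\sqrt{A}$, cycles to $\Tr[\sqrt{A}\,X\,\sqrt{A}]$, and reads this off as a sum of non-negative quadratic forms. You instead factor the projector, $P=UU^\dagger$, and cycle to $\sum_k u_k^\dagger A u_k\ge 0$; your one-line Frobenius-norm variant $\Tr(PA)=\lVert P A^{1/2}\rVert_F^2$ is essentially the paper's argument with the roles of $P$ and $A$ swapped. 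The one substantive addition in your write-up is making the Hermiticity hypothesis on $P$ explicit and pointing out that mere idempotence does not suffice --- the paper invokes this silently when it asserts that ``a projection matrix is positive semidefinite,'' so your caveat is a genuine clarification rather than a detour.
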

\begin{proof}
	A projection matrix is positive semidefinite.
	It suffices to prove that the trace of the product of two positive semidefinite matrices is non-negative.
	Let \( A,X \) be two positive semidefinite matrices.
	We can take a unique matrix square root of either.
	Proceed with the square root of \( A \).
	The trace becomes
	\begin{align}
		\Tr \left[ AX \right ] & =\Tr \left[ \sqrt{A} \sqrt{A} X \right ]                                   \\
		                       & =\Tr\left[\sqrt{A}X\sqrt{A}\right]                                         \\
		                       & = \delta_{il}\sqrt{A}_{ij} X_{jk}\sqrt{A}_{kl}                             \\
		                       & = \sqrt{A}_{ij} X_{jk}\sqrt{A}_{ki}                                        \\
		                       & = \sum_i \sqrt{A}_{\textrm{row}\, i}\, X\, \sqrt{A}_{\textrm{col}\, i}\, ,
	\end{align}
	where each term in the sum must be non-negative by the definition of a positive semidefinite matrix.
	Thus, the trace of a projection matrix times a positive semidefinite matrix is greater than or equal to zero.
\end{proof}

Now, we need to reform the argument of the trace to a projection operator times a positive semidefinite matrix to use this property.
We can accomplish this with a \textit{whitening} transformation.
From here on, we will drop the pulsar index for brevity.
Define
\begin{equation}
	\label{eq:whitening-def}
	H \equiv C ^{1/2}G.
\end{equation}
\( C ^{1/2} \) and \( C ^{-1/2} \) exist because \( C \) is positive definite.

Inserting this quantity into the trace in Equation \eqref{eq:ncal-def} results in
\begin{equation}
	\Tr\left[ F^\dagger C ^{-1/2}H\left(H^\dagger H \right)^{-1}HC ^{-1/2}F\right].
\end{equation}
Then, a cyclic permutation results in
\begin{equation}
	\Tr\left[H\left(H^\dagger H \right)^{-1}HC ^{-1/2}F F^\dagger C ^{-1/2}\right].
\end{equation}
For clarity, we define \( A \equiv  C ^{-1/2}F F^\dagger C ^{-1/2}\).
This matrix is positive semidefinite.
The proof of this follows easily from the observation that \( F F^\dagger \) is positive semidefinite with a zero eigenvalue of multiplicity \( N_{\textrm{TOA}} -1 \) and a single non-zero eigenvalue equal to \( N_{\textrm{TOA}} \).
Our trace is now
\begin{equation}
	\Tr\left[H\left(H^\dagger H \right)^{-1}HA\right].
\end{equation}

We now consider the \( H \) defined by the \fstat combination, call it \( H' \).
This differs from the base-case by the construction of the TM design matrix.
The base-case design matrix is dense, but the \fstat design matrix is block-diagonal.
The column space of the base-case is a sub-space of the \fstat case.
Thus, the null space of the \fstat case is a sub-space of the base-case: \( \textrm{col}(G') \subset \textrm{col}(G)\)\,.
\( H \) is isomorphic to \( G \), so the same inequality holds for \( H \) and \( H' \)\,.

We wish to find an inequality between \( \mathcal{N}^{-1}(f) \)  and \( \mathcal{N}^{\,'-1}(f) \).
Define two projection operators
\begin{equation}
	P(H) = H \left(H^\dag H\right)^{-1} H^\dag
\end{equation}
and
\begin{equation}
	\widetilde{P}(H) = P(H) - P(H').
\end{equation}
It is idempotent:
\begin{align}
	\widetilde{P}(H)\widetilde{P}(H)  = & \left(P(H) - P(H')\right) \left(P(H) - P(H') \right) \\
	=                                   & P(H)P(H) - P(H')P(H)                                 \\
	\nonumber                           & - P(H)P(H') + P(H')P(H')                             \\
	=                                   & P(H) - P(H') - P(H') + P(H')                         \\
	=                                   & P(H) - P(H')                                         \\
	=                                   & \widetilde{P}(H),
\end{align}
where we have used the fact that \( \textrm{col}(H') \subset \textrm{col}(H)\) to reduce \( P(H')P(H)\) and \( P(H)P(H') \) to \( P(H') \)\,.

The hermiticity is trivial: \(  \widetilde{P}(H) = P(H) - P(H') = P^\dag(H) - P^\dag(H') =  \widetilde{P}^\dag(H)\).

We can now relate the traces as follows:
\begin{align}
	\Tr\left[P(H)A\right] & = \Tr\left[\left(\widetilde{P}(H) + P(H')\right)A\right]    \\
	                      & = \Tr\left[\widetilde{P}(H)A\right] + \Tr\left[P(H')A\right] \\
	                      & \geq \Tr\left[P(H')A\right].
\end{align}
The inequality follows from Lemma \ref{lemma:psd-trace} because the trace of \( \widetilde{P}(H)A \) must be non-negative.
Therefore,
\begin{equation}
	\mathcal{N}^{-1}(f) \geq \mathcal{N}^{\,'-1}(f),
\end{equation}
and from the definition of the SNR in Equation \eqref{eq:ncal-snr},
\begin{equation}
	\rho \geq \rho'.
\end{equation}
\qed

\end{document}